\newtheorem{definition}{Definition}
\newtheorem{theorem}{Theorem}
\def\mathbi#1{\textbf{\em #1}}
\begin{document}

\title{From String Detection to Orthogonal Vector Problem}
\author[1]{Yunhao Wang}
\author[1]{Tianyuan Zheng}
\author[1,2]{Lior Horesh}
\affil[1]{Department of Computer Science, Columbia University, New York City, NY}
\affil[2]{
IBM T.J. Watson Research Center, Yorktown Heights, NY}
\date{}

\maketitle

\begin{abstract}
Considering Grover's Search Algorithm (GSA) with the standard diffuser stage applied, we revisit the $3$-qubit unique String Detection Problem (SDP) and extend the algorithm to $4$-qubit SDP with multiple winners. We then investigate unstructured search problems with non-uniform distributions and define the Orthogonal Vector Problem (OVP) under quantum settings. Although no numerically stable results is reached under the original GSA framework, we provide intuition behind our implementation and further observations on OVP. We further perform a special case analysis under the modified GSA framework which aims to stabilize the final measurement under arbitrary initial distribution. Based on the result of the analysis, we generalize the initial condition under which neither the original framework nor the modification works. Instead of utilizing GSA, we also propose a short-depth circuit that can calculate the orthogonal pair for a given vector represented as a binary string with constant runtime.
\end{abstract}

\section{Introduction}
Search problems have long been widely used in computational area \cite{praditwong2010software, hogg1996phase}. In the classical setting, without any prior knowledge about the testing data, unstructured search problems have a runtime of at least $O(N)$, which is linear to the size of the datasets. Later on, Grover \cite{grover1996fast} proposed a search algorithm and brought down the runtime of analogous problems to $O(\sqrt{N})$ with the power of quantum computing. It was then proved to be optimal \cite{zalka1999grover}. 

Similarly, solving the Orthogonal Vector Problem (OVP) in an unstructured dataset through classical algorithms is polynomial in the number of different vectors and the dimension of each vector. In fact, Orthogonal Vector Conjecture (OVC) \cite{williams2005new} states that this problem cannot be solved in $O(n^{2-\epsilon})$ with $d = c\log n$, for $n$ vectors in $d$ dimensions, with some $c \geq 1$ and all $\epsilon > 0$. Furthermore, the Strong Exponential Time Hypothesis (SETH) \cite{impagliazzo2001complexity} implies OVC, which means that if we can find a faster algorithm for OVP, we could possibly boost the runtime for solving $k$-SAT \cite{williams2005new}. We suggest that OVP is worth being studied under the quantum computing paradigm, and it seems intuitive that we might be able to reformulate OVP as a search problem so that we can leverage the GSA machinery to it.

Therefore, in this paper, we first provide a brief overview of GSA in section \ref{GSA}. Two different search problems are chosen. The String Detection Problem is introduced in section \ref{sdp} with a uniform initial distribution. The Orthogonal Vector Problem is re-defined under quantum settings with a simplified experiment setup in section \ref{ovp}. We provide detailed implementations and discussions in section \ref{result} including case analysis under modified GSA framework. We also present another perspective for solving OVP in section \ref{extend} which achieves a better result than the one under the GSA framework. The conclusion is given in section \ref{conclusion}.

\section{Grover's Search Algorithm}\label{GSA}

\begin{definition}[Unstructured Search Problem]\label{def:search_prob}
Given a boolean function $f_w(x): \{0,1\}^n \rightarrow \{0,1\}$, where $x$ is a search term and $w$ is the winner, find $w$ such that $f_w(x^* = w) = 1 $. This is an unstructured data search problem for $x \in \{0, 1\}^{n}$. $n$ denotes the bit size of the search space.
\end{definition}

If we want to identify the secret parameter $w$ such that $f_w(x^* = w) = 1$ by trivially query the oracle, it would require $O(N)$ calls to the oracle function, with $N = 2^n$. By phase flip and  amplitude amplification, GSA can measure $w$ with high probability using only $O(\sqrt{N})$ calls on the oracle. 

In the algorithm, Hadamard gates $\mathbi{H}$ are first applied to all qubits so that they are all under the  superposition state $\frac{1}{\sqrt{2}} \left(\ket{0} + \ket{1}\right)$. The oracle is given by the unitary operator $U_w : U_{w} |q\rangle = (-1)^{f(q)} |q\rangle$. Following that transformation, the phase of all winner states $\ket{w}$ should be flipped, and the mean distribution goes down. A diffuser stage is then applied to all qubits. This can also be considered as an amplitude amplification achieved by flipping all states around the updated mean. Therefore, after repeating the above steps for $O(\sqrt{N})$ times, the initial state will rotate towards the winner states $\ket{w}$ gradually and thus $\ket{w}$ can be measured with a higher probability.

\section{Problem Statement}
\subsection{String Detection Problem}\label{sdp}
In the SDP, we use $n$ qubits to represent a binary string $\in \{0,1\}^n$, with each qubit representing one classical bit in a string. The size of the search space is $O(2^n) = O(N)$. Assume that we have prior knowledge on a  winner $\ket{w}$ so that the oracle function $f$ is static. This also suggests that for different winners, the oracle implementation will be different. Formally, the SDP with a single winner is defined as follows:

\begin{definition}[String Detection Problem]\label{def:sdp}
Given $n$ qubits and winner state $\ket{w}$ representing a unique string $\{0,1\}^n$, find the quantum circuit $\mathbi{QC}$ such that given $n$ qubits as input, the top measurement of the output is the state $\ket{w}$.

\end{definition}

From the definition, we see that SDP can be treated as an unstructured search problem. Meanwhile, SDP with multiple winners can be easily extended based on this setting by analyzing the final measurement against the superposition of all winner states.

\subsection{Orthogonal Vector Problem}\label{ovp}
OVP aims to identify vectors in an unstructured sequence that are orthogonal to the reference vector. The classical versions of OVP find the orthogonal counterparts for the reference vector by traversing through all candidates and checking the inner products. Thus, the time cost should be linear to the vector dimension and the size of the dataset. However, in quantum circumstances, it is presumable to obtain an implementation quadratically faster by applying the essence of GSA. 

Unlike SDP with GSA, the oracle in OVP is not static and needs to verify the orthogonality between vector pairs instead of checking the identity. We start with the most basic case of vectors $\{0,1\}^2$. The formal problem is defined as follows:

\begin{definition}[Basic Orthogonal Vector Problem]\label{def:ovp_1}
Given a binary reference vector $v_r$ $\in \{0, 1\}^2$ and two other testing vectors $v_1, v_2 \in \{0, 1\}^2$, we initialize the reference qubit and two other testing qubits to their corresponding superposition states $\ket{v_r}, \ket{v_1}, \ket{v_2}$. Find $v_{i, i\in \{1,2\}}$ in the orthogonal vector pair $(v_r, v_i)$, s.t., $\langle v_r, v_i\rangle = 0$.
\end{definition}

In order to identify the indices $i$, we view the final measurement of quantum circuit as a quantum-representation of the testing qubit that is orthogonal to the reference qubit. For example, if the reference qubit is on a state with vector representation $\left(\begin{array}{cc}
    0 & 1
\end{array}\right)^T$, and the two testing qubits are on $\left(\begin{array}{cc}
    1 & 0
\end{array}\right)^T$, $\left(\begin{array}{cc}
    0 & 1
\end{array}\right)^T$ respectively. The expected top measurement should be $\ket{10}$, indicating that the first qubit is orthogonal to the reference qubit.

\section{Implementation and Results}\label{result}
\subsection{String Detection Problem}
Multiple works have been performed on no more than $3$-qubit search problems with single winners \cite{figgatt2017complete, kwiat2000grover, jones1998implementation}. We extend the previous works and focus on a $4$-qubit circuit with multiple winners. Consider all $2^4 = 16$ possible states, we denote each state as a binary string $\{0,1\}^4$, so that we have our winners $w\in \{0,1\}^4$.

Suppose that the winner is the string $\mathbf{0110}$. If we directly apply the oracle strategy for $3$-qubit on $4$-qubit, the results would be as shown in Fig.\ref{fig:invalid_4}. Notice that although all detected "winners" have $\mathbf{11}$ to be the value of the middle two qubits, there are no constraints on the other two qubits. Hence, all combinations $\in \{0,1\}^2$ of the first and the last qubit show up in the final measurement.

\begin{figure}[h]
     \centering
     \begin{subfigure}[h]{0.45\textwidth}
         \centering
         \includegraphics[width=\textwidth]{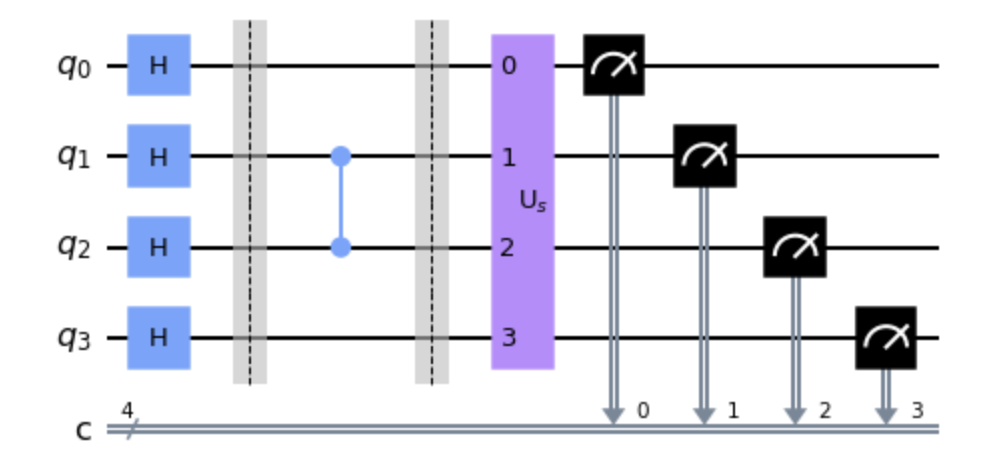}
         \caption{Naive 4-qubit String Detection Circuit}
         \label{fig:invalid_4circuit}
     \end{subfigure}
     \hfill
     \begin{subfigure}[h]{0.45\textwidth}
         \centering
         \includegraphics[width=\textwidth]{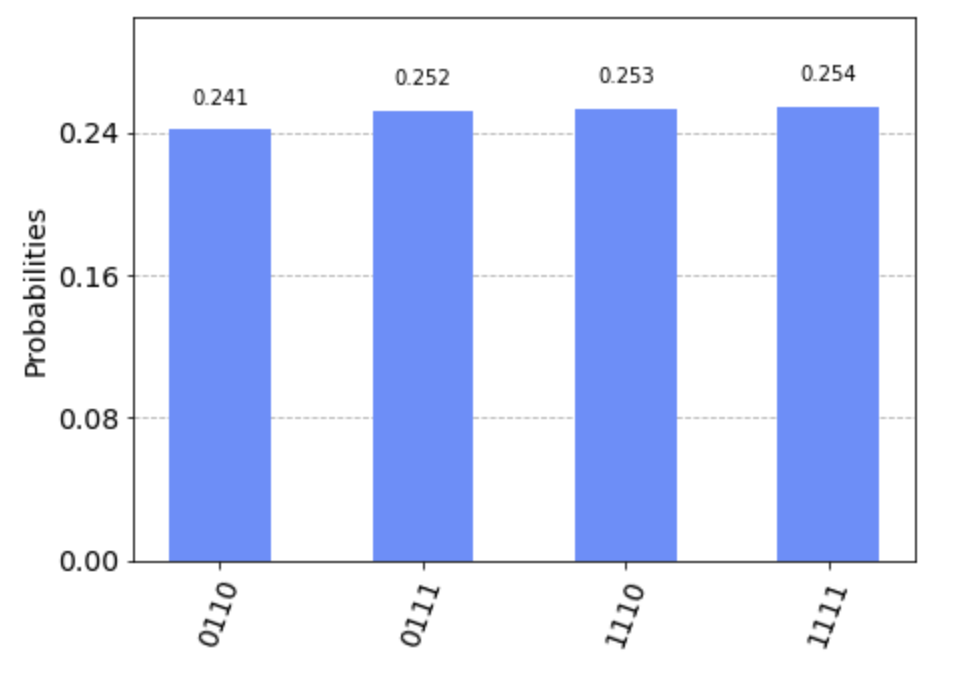}
         \caption{Naive 4-qubit String Detection Result}
         \label{fig:invalid_4result}
     \end{subfigure}
        \caption{4-qubit string detection results with multiple winners. (a) is the circuit of detecting string \textbf{0110}. The oracle is simply a $\mathbi{CZ}$ gate on the middle two qubits. General diffuser stage is applied. (b) is the result, we expect to see $\ket{0110}$ standing out, but instead receive $\sum_{i,j}^2 \frac{1}{2} \ket{i11j}$ each state $\ket{i11j}$ with probability $p$ around $\frac{1}{4}$.}
        \label{fig:invalid_4}
\end{figure}

Therefore, we introduce the multi-controlled multi-target gate ($\mathbi{MCMT}$) \cite{schuch2003programmable} and an ancilla qubit \cite{jqc.2021.018114} in our $4$-qubit circuit, so that the $\mathbi{Z}$ gate will be applied to the ancilla bit if and only if the first $4$ qubits represent the winner. This will help us further collapse the probability onto the winners. Furthermore, based on the work first proved by Bennett et al. \cite{bennett1997strengths} and later enhanced by \cite{boyer1998tight}, for multiple winners in GSA, the same algorithm framework can be applied, but the number of iterations of diffusing stage we need to apply changes from $O(\sqrt{N})$ to $O(\sqrt{N/k})$, where $k$ is the number of winners we have. Fig.\ref{fig:4-results} are the results based on our implementation for $k=2,3,4$ respectively. The circuit of our implementation is shown in Fig.\ref{fig:4circuit} in Appendix.

\begin{figure}[!ht]
     \centering
     \begin{subfigure}[h]{0.9\textwidth}
         \centering
         \includegraphics[width=0.5\textwidth]{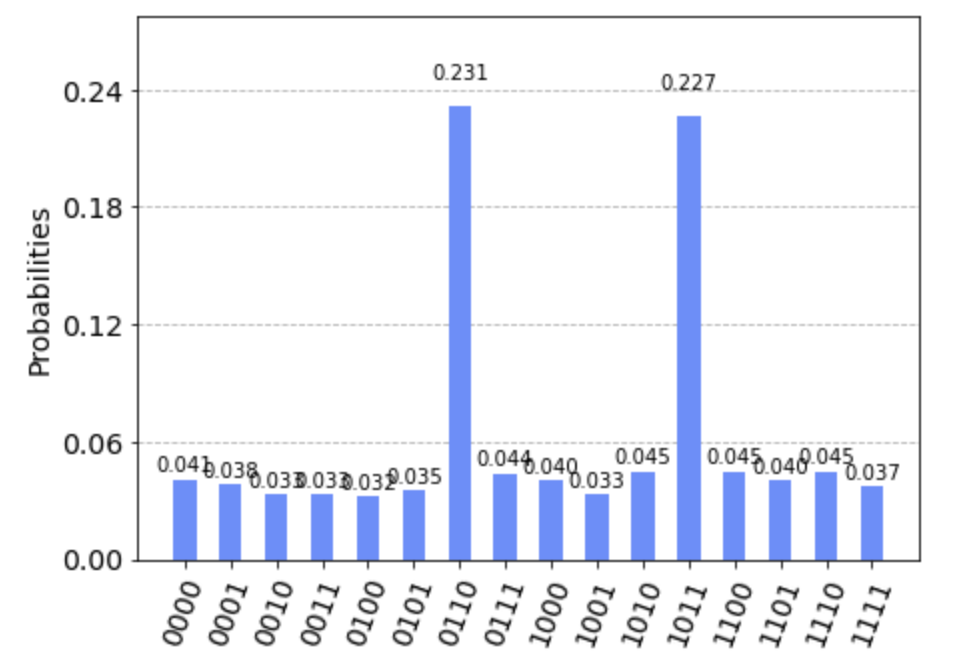}
         \caption{4-qubit with 2 winners \textbf{0110} and \textbf{1101}}
         \label{fig:42}
     \end{subfigure}
     \hfill
     \begin{subfigure}[h]{0.45\textwidth}
         \centering
         \includegraphics[width=\textwidth]{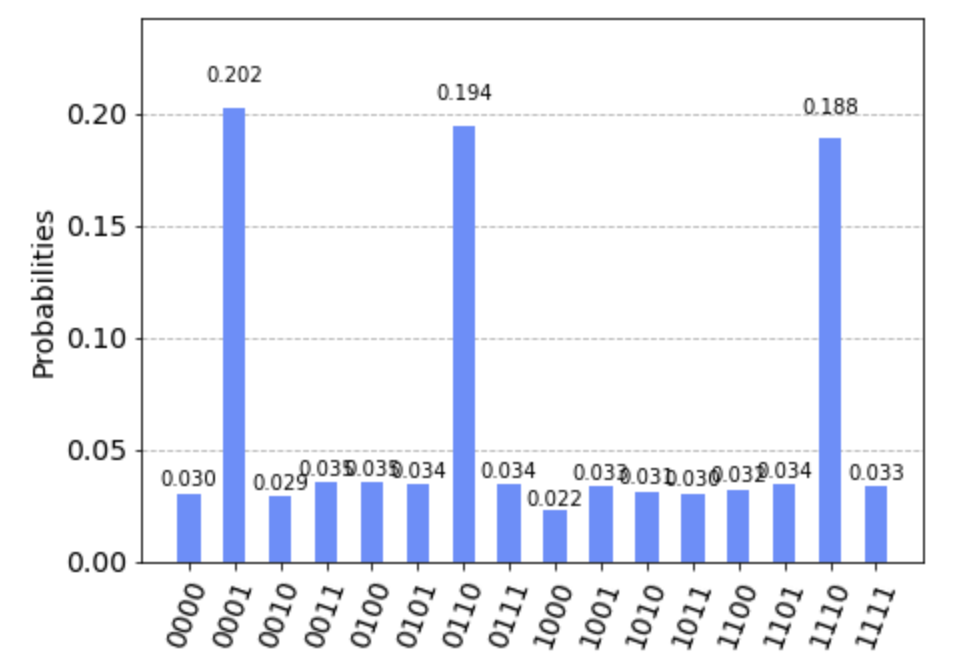}
         \caption{4-qubit with 3 winners \textbf{1000}, \textbf{0110} and \textbf{0111}}
         \label{fig:43}
     \end{subfigure}
     \hfill
     \begin{subfigure}[h]{0.45\textwidth}
         \centering
         \includegraphics[width=\textwidth]{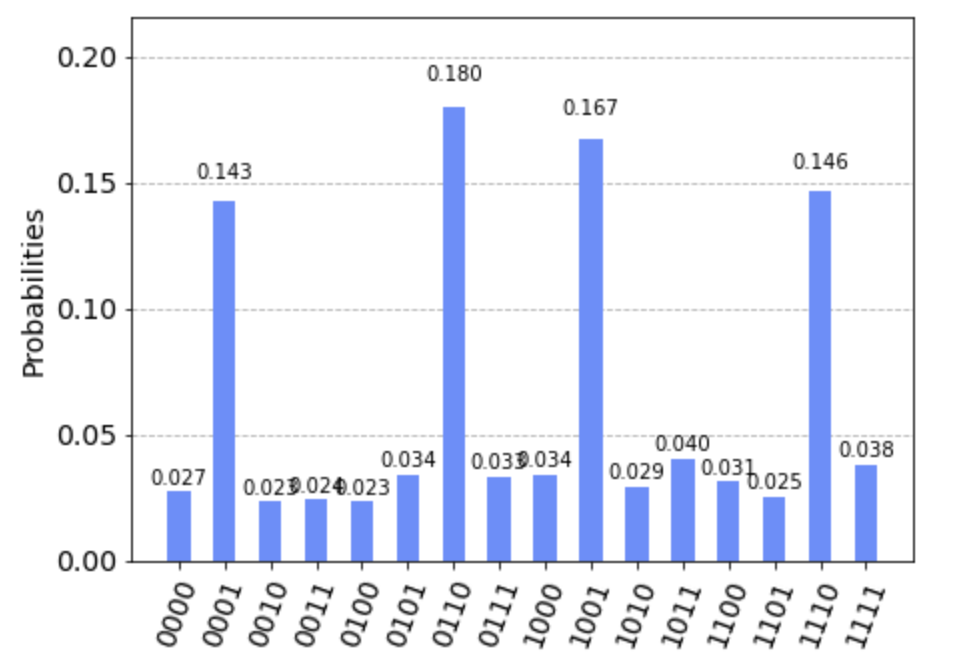}
         \caption{4-qubit with 4 winners \textbf{1000}, \textbf{0110}, \textbf{1001} and \textbf{0111}}
         \label{fig:44}
     \end{subfigure}
        \caption{4-qubit String Detection Results with multiple winners.}
        \label{fig:4-results}
\end{figure}

A simple analysis of the circuit is given for winners \textbf{1101}$\rightarrow \ket{1101}$ and \textbf{0110}$\rightarrow\ket{0110}$:

\begin{enumerate}
\item 
After state preparation on all $4$ qubits, we have them to be on superposition state $\cfrac{1}{4}\sum_{i=0}^{15}\ket{i}$.

\item 
We then query the oracle function for multi winners with $\mathbi{MCMT}$ gate targeting on the ancilla bit. The result of the four testing qubits is:
    \begin{equation}
        \cfrac{1}{4}\sum_{i=0,i\neq 6,13}^{15}\ket{i} - \cfrac{1}{4}\ket{1101} - \cfrac{1}{4}                                           \ket{0110}
    \end{equation}

\item 
Then inside the diffuser stage:
    \begin{enumerate}
        \item Apply $\mathbi{H}$ gates on all $4$ qubits, so that the state will be mapped back to $\ket{0}$:
            \begin{equation}
                \begin{split}
                \ket{\psi} =& \cfrac{3}{4}\ket{0000} -
                \cfrac{1}{4} (\ket{0111} + \ket{1001} + \ket{1110})\\
                &+ \cfrac{1}{4} (\ket{0011} +
                \ket{0100} + \ket{1010} + \ket{1101}) 
                \end{split}
            \end{equation}
        \item Following from this step, we calculate $2\ket{s}\bra{s} - I$ where $\ket{s}$ is the initial state, which flips the states around the mean. We first apply $\mathbi{X}$ gates on all $4$ qubits:
            \begin{equation}
                \begin{split}
                    \ket{\psi} =&\cfrac{3}{4}\ket{1111} - \cfrac{1}{4} (\ket{1000} + \ket{0110} + \ket{0001})\\
                    &+ \cfrac{1}{4} (\ket{1100} + \ket{1011} + \ket{0101}+\ket{0010})
                \end{split}
            \end{equation}
        \item And then apply a $\mathbi{CCCZ}$ gate among $4$ qubits with the last one as the targeting qubit:
            \begin{equation}
                \begin{split}
                    \ket{\psi} =&-\cfrac{3}{4}\ket{1111} - \cfrac{1}{4} (\ket{1000} + \ket{0110} + \ket{0001})\\
                    &+ \cfrac{1}{4} (\ket{1100} + \ket{1011} + \ket{0101}+\ket{0010})
                \end{split}
            \end{equation}
        \item The last step is to apply $\mathbi{X}$ gates again:
            \begin{equation}
                \begin{split}
                    \ket{\psi} =&-\cfrac{3}{4}\ket{0000}- \cfrac{1}{4} (\ket{0111} + \ket{1001} + \ket{1110})\\
                &+ \cfrac{1}{4} (\ket{0011} +
                \ket{0100} + \ket{1010} + \ket{1101}) 
                \end{split}
            \end{equation}
        \item Finally, we add $\mathbi{H}$ gates to all 4 qubits to revert them back from $\ket{0}$ to $\ket{s}$:
            \begin{equation}
                \begin{split}
                    \ket{\psi} =&-\cfrac{5}{8}(\ket{0110} + \ket{1101}) - \cfrac{1}{8} \sum_{i=0, i \neq 6, 13}^{15} \ket{i}
                \end{split}
            \end{equation}
    \end{enumerate}
\end{enumerate}

Since we have $16$ vectors and $2$ winners, roughly $2$ iterations should be enough. We perform only $1$ iteration on the simulator and the result is desired. The amplitude for winner states $\ket{0110}$ and $\ket{1101}$ is theoretically around $0.39$, and the final measurement we have is around $0.23$, which is already considerably larger than the other states.

\subsection{Orthogonal Vector Problem}
\subsubsection{GSA}\label{gsa_case}
Unlike the SDP, OVP does not have a static oracle. Thus we introduce a reference qubit(s) representing one of the vectors inside the final orthogonal vector pairs that we might detect. Moreover, the oracle needs to indicate whether two state vectors are orthogonal to each other. This can be achieved by embedding a controlled-SWAP ($\mathbi{c-SWAP}$) gate \cite{buhrman2001quantum} between a testing qubit and the reference qubit:
\begin{equation}
    \mathbi{c-SWAP}_{q,\ket{\phi}\ket{\psi}} = \ket{0}\bra{0}\otimes I\otimes I+\ket{1}\bra{1}\otimes \mathbi{SWAP} =
\begin{pmatrix}
    1 &  0 &  0 &  0 &  0 &  0 &  0 &  0\\
    0 &  1 &  0 &  0 &  0 &  0 &  0 &  0\\
    0 &  0 &  1 &  0 &  0 &  0 &  0 &  0\\
    0 &  0 &  0 &  1 &  0 &  0 &  0 &  0\\ 
    0 &  0 &  0 &  0 &  1 &  0 &  0 &  0\\
    0 &  0 &  0 &  0 &  0 &  0 &  1 &  0\\
    0 &  0 &  0 &  0 &  0 &  1 &  0 &  0\\
    0 &  0 &  0 &  0 &  0 &  0 &  0 &  1
\end{pmatrix}
\end{equation}

Given three qubits $\ket{\phi}$ $\ket{\psi}$ and a controlling qubit $q$, if $q =\ket{1}$, the gate changes $\ket{\phi}\ket{\psi}$ to $\ket{\psi}\ket{\phi}$; if $q = \ket{0}$, the $\mathbi{c-SWAP}$ gate does nothing. Therefore, the final state of $\mathbi{c-SWAP}$ gate given $\ket{\phi}$, $\ket{\psi}$ can be written as:
\begin{align}
    \cfrac{1}{2}\ket{0}\left(\ket{\phi}\ket{\psi}+\ket{\psi}\ket{\phi}\right)+\cfrac{1}{2}\ket{1}\left(\ket{\phi}\ket{\psi}-\ket{\psi}\ket{\phi}\right)
\end{align}

For orthogonal qubits $\ket{\phi}, \ket{\psi}$ such that $\bra{\psi} \phi \rangle = 0$, the probability of the controlling qubit to be measured as $\ket{1}$ is $ \cfrac{1}{2}$. For nearly identical $\ket{\phi}$ and $\ket{\psi}$, the probability will instead be 0, which means that the measurement will always be pure $\ket{0}$. Therefore, we aim to detect an orthogonal pair by detecting a mixed state $\frac{1}{2}\left(\ket{0} + \ket{1}\right)$. Notice that in order to embed $\mathbi{c-SWAP}$, we  need to prepare a ``control" qubit for each pair of qubits performing the swap operation.

Under the GSA setting, we only consider the trivial case such that all vectors are in $\{0,1\}^2$, i.e., one qubit will represent one single two-dimensional vector. Therefore, there will be $2n+1$ qubits, with $n$ qubits representing $n$ vectors $\in \{0,1\}^2$ and another $n$ qubits as the measuring qubits for each vector qubits to perform the SWAP test. We only apply the oracle and the diffuser stage to the measuring qubits instead of the vector qubits we are testing against. The result circuit is in Fig.\ref{fig:ovp-circuit} under Appendix. Each bit $r_i$ in the final measurement can be viewed as an indicator of whether the testing vector represented by $q_i$ is orthogonal to the reference vector. 

We prepare the reference qubit to represent vector $\left(\begin{array}{cc}
    0 & 1 
\end{array}\right)^T$, and two testing qubits to be on $\left(\begin{array}{cc}
    0 & 1 
\end{array}\right)^T$ and $\left(\begin{array}{cc}
    1 & 0 
\end{array}\right)^T$ respectively. Thus we should see the state $\ket{10}$ with top measurement probability at the end, i.e., the measurement should clearly indicate that the second testing qubit stores a vector orthogonal to the reference vector. However, the result as shown in Fig.\ref{fig:ovp-GSA-result} is not desired. Instead of a single winner $\ket{10}$, we have both $\ket{10}$ and $\ket{00}$.

\begin{figure}[h]
    \centering
    \includegraphics[width=0.4\textwidth]{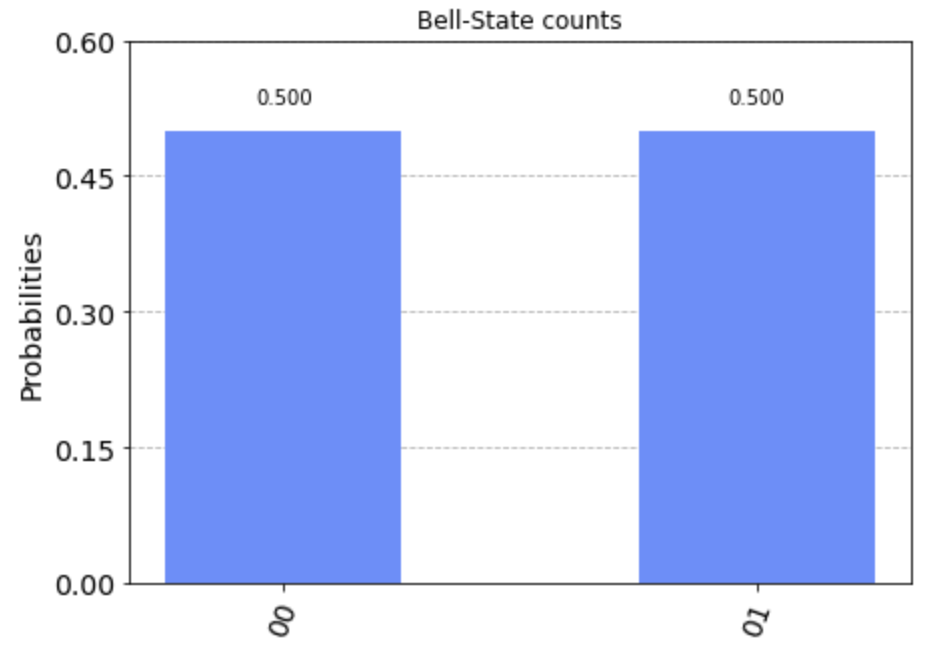}
    \caption{Measurement Results for OVP under GSA setting.}
    \label{fig:ovp-GSA-result}
\end{figure}

If we analyze the circuit by skipping the state preparation and the SWAP test but directly setting the second qubit as $\ket{+} = \frac{1}{\sqrt{2}} \left(\ket{0} + \ket{1}\right)$, we will obtain the following derivation:

\begin{enumerate}
    \item After the SWAP test, the first two qubits are on the state:
        \begin{equation}
            \cfrac{1}{\sqrt{2}}\left(\ket{00} + \ket{10}\right)
        \end{equation}
        
    \item Apply the oracle to both, which flips everything except $\ket{00}$:
        \begin{equation}
            \cfrac{1}{\sqrt{2}}\left(\ket{00} - \ket{10}\right)
        \end{equation}
        
    \item Apply $\mathbi{H}^{\otimes 2}$ gates to both: 
        \begin{equation}
            \cfrac{1}{2\sqrt{2}}\left(\ket{00} + \ket{10} +  \ket{01} + \ket{11} + \ket{11} - \ket{00} + \ket{10} - \ket{01})\right) = \frac{1}{\sqrt{2}}\left(\ket{10}+\ket{11}\right)
        \end{equation}
        
    \item Apply $\mathbi{CZ}\times \mathbi{Z}^{\otimes 2}$ gates to both, which flips all states except $\ket{00}$, we then have
        \begin{equation}
            \frac{1}{\sqrt{2}}\left(-\ket{10}-\ket{11}\right)
        \end{equation}
        
    \item Apply $\mathbi{H}^{\otimes 2}$ gates again to both: 
        \begin{equation}
            \cfrac{1}{\sqrt{2}}\left(-\ket{00}+\ket{10}\right)
        \end{equation}
        
\end{enumerate}

Therefore, if the oracle outputs two measuring qubits on a mixed state $\frac{1}{\sqrt{2}} \left(\ket{00} - \ket{10}\right)$ after applying the diffuser, they would end up with the state $\frac{1}{\sqrt{2}} \left(-\ket{00} + \ket{10}\right)$. Only the phase for $\ket{10}$ is flipped from $\pi$ back to $0$, while the amplitude remains the same. This indicates that with non-uniform distribution, the general GSA algorithm does not work as expected.

Similar case happens when we apply the amplitude amplification on each single measuring qubit. If the qubit represents a testing vector that is orthogonal to the reference vector, then it will be on the state $\frac{1}{\sqrt{2}}\left(\ket{0} +\ket{1}\right)$ which is uniform, while in the other case, it will be just $\ket{0}$. We want to find a gate $U_f = \left(\begin{array}{cc}
    a & b \\
    c & d
\end{array}\right)$ such that: $U_f\ket{+} = \ket{1}$ and $U_f\ket{0} = \ket{0}$ by solving the following equation system:
\[
\left(\begin{array}{cc}
    a & b \\
    c & d
\end{array}\right)\left(\begin{array}{c}
    1  \\
    1 
\end{array}\right) = \left(\begin{array}{c}
    0 \\
    1
\end{array}\right),
\left(\begin{array}{cc}
    a & b \\
    c & d
\end{array}\right)\left(\begin{array}{c}
    1  \\
    0 
\end{array}\right) = \left(\begin{array}{c}
    1 \\
    0
\end{array}\right)
\]
 
The result for the equation is $U_f = \left(\begin{array}{cc}
    1 & -1 \\
    0 & 1
\end{array}\right)$, which is clearly not unitary. Therefore, we are not able to decompose it under Pauli base with density matrices, which implies that there are no combinations of quantum logic gates to represent this transformation.

\subsubsection{Modified GSA}\label{sec:modified_gsa}
Biron et al \cite{biron1999generalized} proposed a modified GSA algorithm adapted for arbitrary initial distribution, and the main modification is to omit the first step that puts every qubits in mixed states in the original algorithm and then proceeds with the normal GSA loop that flips the phase of the marked (desired) states and rotates around average of all states. We take our example in \cref{gsa_case} and use $(a_1, a_2, a_3, a_4)$ to represent the amplitude of quantum states $\ket{00}$, $\ket{01}$, $\ket{10}$, and $\ket{11}$ respectively:
\begin{enumerate}
    \item 
    \textbf{Initialization}: Assume the SWAP test was performed before, and we have $\frac{1}{\sqrt{2}}(1,0,\mathbf{1},0)$ representing $\frac{1}{\sqrt{2}}(\ket{00} + \ket{10})$ as the non-uniform initial distribution.
    
    \item
    \textbf{Iteration 1}: (Flip the phase of the marked state and rotate around the average of all states, same below):$\frac{1}{\sqrt{2}}(-1,0,\mathbf{1},0)$
    
    \item
    \textbf{Iteration 2}: $\frac{1}{\sqrt{2}}(0,-1,\mathbf{0},-1)$
    
    \item
    \textbf{Iteration 3}: $\frac{1}{\sqrt{2}}(-1,0,\mathbf{-1},0)$
    
    \item
    \textbf{Iteration 4}: $\frac{1}{\sqrt{2}}(1,0,\mathbf{-1},0)$
    
    \item
    \textbf{Iteration 5}: $\frac{1}{\sqrt{2}}(0,1,\mathbf{0},1)$
    
    \item
    \textbf{Iteration 6}: $\frac{1}{\sqrt{2}}(1,0,\mathbf{1},0)$, which transforms back to our initial distribution.
\end{enumerate}

The above iterations follow exactly the procedure described in \cite{biron1999generalized}. To end the loop, we perform more iterations than required. However, the amplitude of the marked (desired) state $\ket{10}$ (marked in bold) is either $\pm 1$ or $0$, which means that the probability of detecting it is either $1/2$ or totally $0$. In fact, even with the modification proposed in \cite{ventura2000quantum}, which not only flips the phase for marked states, but also for all states that occurs in the original superposition (in our case, they are state $\ket{00}$ and $\ket{10}$) after the first iteration, it still simply transforms $\frac{1}{\sqrt{2}}(-1,0,1,0)$ to $\frac{1}{\sqrt{2}}(1,0,-1,0)$ and then to $\frac{1}{\sqrt{2}}(-1,0,1,0)$ back again. Obviously this does not work well in our example, and this case can be further generalized. We first formalize the framework definition in \cite{biron1999generalized}:

\begin{definition}[Modified GSA]\label{modified_gsa}
For $n$ qubits, denote the desired states $\ket{i}$ to be the marked states. W.L.O.G., we require $|\{\ket{i}\}| \leq N/2, N =2^n$. The rest quantum states are the unmarked states. Denote $k_i(t)$ to be the amplitude of the marked state $\ket{i}$, and $l_j(t)$ to be the amplitude of the unmarked state $\ket{j}$. $k_i(0), l_j(0)$ are the initial amplitudes derived from the initial distribution. Denote the total number of iterations as $T$ where $T\geq 2$. The Modified GSA is as follows:
\begin{itemize}
    \item
    Initialize marked and unmarked states on $n$ qubits to an arbitrary initial distribution.
    
    \item
    For $t \in \{0,\dots, T-1\}$:
    \begin{itemize}
        \item 
        $k_i(t+1) = \frac{2}{N}\left(\sum_j l_j(t) - \sum_i k_i(t)\right) + k_i(t)$
        \item
        $l_j(t+1) = \frac{2}{N}\left(\sum_j l_j(t) - \sum_i k_i(t)\right) - l_j(t)$
    \end{itemize}
\end{itemize}
\end{definition}

\begin{theorem}
For $n$ qubits with $N = 2^n$ possible states, denote the set of all marked state $\ket{i}$ as $W$ and denote $S$ to be the set of all states in the initial distribution that has non-zero amplitude. If $|W| = |S\backslash W| = \frac{N}{4}$ and $W \subset S$, i.e., we have in total $\frac{N}{2}$ states that have non-zero amplitudes in the initial distribution, while half of them form the whole set of the marked states. If all of the states in $S$ are on the same amplitude initially, then the probability of measuring a specific marked state based on the algorithm defined in \cref{modified_gsa} is either $0$ or $\frac{2}{N}$ for $T \geq 2$.
\end{theorem}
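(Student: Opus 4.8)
The plan is to exploit the permutation symmetry of the initial configuration to collapse the $N$-dimensional amplitude vector onto a three-dimensional one, and then to show that the reduced dynamics is periodic, so that the amplitude of a marked state visits only finitely many values.

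First I would partition the $N$ states into three classes: the $N/4$ marked states (all contained in $S$), the $N/4$ unmarked states lying in $S$, and the remaining $N/2$ unmarked states outside $S$ (every state outside $W$ that is not already counted is unmarked, since $W \subset S$ forces all nonzero amplitudes to sit on marked or on unmarked-in-$S$ states). Because the updates in \cref{modified_gsa} act on each amplitude only through its own value and through the single global quantity $D(t) := \sum_j l_j(t) - \sum_i k_i(t)$, and because every state within a class starts at the same amplitude, all states in a given class retain a common amplitude at every step. I would therefore track three scalars $k(t), l_1(t), l_2(t)$ for the three classes, with $k(0) = l_1(0) = a$ and $l_2(0) = 0$, where normalization $\tfrac{N}{2}a^2 = 1$ forces $a = \sqrt{2/N}$ (so each state in $S$ carries initial probability $2/N$).

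Next I would substitute the class sizes into the recurrence, giving $D(t) = \tfrac{N}{4}l_1(t) + \tfrac{N}{2}l_2(t) - \tfrac{N}{4}k(t)$ and the closed homogeneous linear map
\[
k' = \tfrac12 k + \tfrac12 l_1 + l_2, \qquad l_1' = -\tfrac12 k - \tfrac12 l_1 + l_2, \qquad l_2' = -\tfrac12 k + \tfrac12 l_1.
\]
Writing this map as a matrix $M$, I would compute its characteristic polynomial, which I expect to be $\lambda^3 + 1$; its roots are sixth roots of unity ($-1$ and $e^{\pm i\pi/3}$), so $M$ is diagonalizable with $M^6 = I$, and the orbit of any amplitude vector is periodic with period dividing $6$. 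This spectral fact is the structural reason the algorithm cannot amplify past its starting value.

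Finally I would iterate explicitly from $(a,a,0)$. The useful observation is $D(0) = 0$, so the first step leaves $k$ unchanged; carrying out the six steps yields the cycle of marked amplitudes $k(t) \in \{a,\,a,\,0,\,-a,\,-a,\,0\}$ repeating with period $6$, whence $|k(t)|^2 \in \{0, a^2\} = \{0, 2/N\}$ for all $t$, and in particular for every $T \ge 2$. Since all marked states share the amplitude $k(T)$, the probability of measuring any one of them equals $|k(T)|^2$, which establishes the claim. The only genuinely delicate point is the symmetry-reduction step: one must verify carefully that the three classes stay internally uniform under the coupled update. Once that is settled, the argument reduces either to a short finite iteration or, equivalently, to the spectral statement $M^6 = I$ together with a listing of the six orbit values.
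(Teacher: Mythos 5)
Your proposal is correct and takes essentially the same route as the paper's own proof: both arguments rest on the observation that all states within each of the three classes (marked, unmarked in $S$, zero initial amplitude) evolve identically under the update rule, and both reduce the theorem to the explicit period-6 cycle $(a,a,0)\to(a,-a,0)\to(0,0,-a)\to(-a,-a,0)\to(-a,a,0)\to(0,0,a)\to(a,a,0)$, which is precisely the iteration the paper computes in Section~\ref{sec:modified_gsa} and then re-invokes in its proof by ``merging'' each class into a single state of an $N=4$ system. The differences are presentational rather than substantive: where the paper's merging step is asserted without justification, you make the symmetry reduction rigorous by noting that the update couples amplitudes only through the global quantity $D(t)$, and your spectral certificate (characteristic polynomial $\lambda^3+1$, hence $M^3=-I$ and $M^6=I$) is a nice but logically redundant supplement to the explicit iteration, so your write-up is, if anything, more self-contained than the paper's.
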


\begin{proof}
Since all states in $S$ have the same amplitude in the initial distribution, all marked states $\ket{i}$ act exactly the same in \cref{modified_gsa}. Thus, w.l.o.g., we can merge them into one state with higher weight (i.e., amplitude $=\frac{1}{\sqrt{2}}$). Similarly, states in $S \backslash W$ can also be merged into one state with amplitude $\frac{1}{\sqrt{2}}$. For states with zero amplitude in the initial distribution, although they contribute equally in the amplitude average, to make things symmetric, they can be merged into two states each with an initial amplitude $0$. Now, w.l.o.g, we assume the initial distribution is $\frac{1}{\sqrt{2}}(1,0,1,0)$, i.e., state $\ket{10}$ is the merged marked state for $W$, state $\ket{00}$ is the merged states for states in $S \backslash W$, and states $\ket{01}$ and $\ket{11}$ are the merged states for the other states with zero initial amplitude. The analysis follows exactly as above and the probability of state $\ket{10}$ (again, w.l.o.g) is either $0$ or $(\pm 1/\sqrt{2})^2 = \frac{1}{2}$. Notice that the property that all marked states sharing the same amplitude is also preserved during the procedure defined in \cref{modified_gsa}. Therefore, for a specific marked state, the probability of detecting it in the final measurement is either $0$ or $\frac{1/2}{N/4} = \frac{2}{N}$.
\end{proof}

\subsubsection{Extended Version}\label{extend}
If we step back from GSA, but only look at the OVP, we notice that if we consider the OVP for binary vectors only, it seems straight forward to calculate the orthogonal vector by simply outputting the one's complement of the reference qubit(s). Hence, we define a new quantum setting of OVP that works for vectors of arbitrary dimension as following:

\begin{definition}[Quantum One's Complement]\label{def:one_comp}
Given a $n$-bit binary vector $\in \{0,1\}^n$, initialize $n$ reference qubits with each qubit representing 1 bit of the given vector and $n$ testing qubits on mixing states, find a quantum circuit $\mathbi{QC}$ such that the final measurements $\ket{i}_{i\in [2^n-1]}$ on $n$ testing qubits is a quantum-representation of the reference vector's one's complement.
\end{definition}

For example, if the reference qubit(s) are $\ket{101}$, the final measurement on testing qubits should be $\ket{010}$. Notice that for any vector $v \in \{0,1\}^n$, $\langle v, \mathbf{0} \rangle  = 0$. Hence, to simplify the algorithm, we require that for an orthogonal vector pair $(v, w)$, $v_i \neq w_i$.

The state preparation is the same as the one for GSA that we put $n$ testing qubits into a mixed state, representing all possible vectors $\in \{0,1\}^n$. If a reference qubit is $\ket{1}$, we want to output $\ket{0}$, and $\ket{1}$ otherwise. Hence a we apply a $\mathbi{CZ}$ gate followed by another $\mathbi{H}$ gate. No diffuser or any other kind of amplitude amplification is needed here since the result is deterministic. The implemented circuit is shown as Fig.\ref{fig:ovp-no-GSA} under Appendix.

Following is a case analysis for the reference vector $\left(\begin{array}{ccc}
     1 & 0 & 1
\end{array}\right)^T$:

\begin{enumerate}
    \item Given reference qubits under state $\ket{101}$, after state preparation, we have $\ket{010}$ on reference qubits, and testing qubits are under:
        \begin{equation}
            \cfrac{1}{2\sqrt{2}}\left(\ket{000} + \ket{001} + \ket{010} + \ket{011} + \ket{100} + \ket{101} + \ket{110} + \ket{111}\right)
        \end{equation}
        
    \item After applying $\mathbi{CZ}$ gates to control qubits $q_3$, $q_4$, $q_5$ with state $\ket{010}$ and targeting qubits $q_0$, $q_1$, $q_2$, the first three qubits become:
        \begin{equation}
            \begin{split}
                 \cfrac{1}{2\sqrt{2}}\left((\ket{0} + \ket{1})\otimes (\ket{0} - \ket{1})\otimes (\ket{0} + \ket{1})\right)
            \end{split}
        \end{equation}
        
    \item Then we apply $\mathbi{H}$ gates again on the first three qubits:
        \begin{equation}
             \ket{0}\otimes\ket{1}\otimes\ket{0} = \ket{010}
        \end{equation}
\end{enumerate}

Therefore, in this case we can output the a fresh copy of the one's complement with probability $=1$. Since all $\mathbi{CZ}$ gates can be applied simultaneously to each pair of qubits (one as a reference qubit, one as a testing qubit), the depth of the given circuit is constant, while under the classical setting, we need to traverse all bits one by one. Moreover, unlike simply flipping the qubits from $\ket{0}$ to $\ket{1}$ or vice versa, by applying $\mathbi{CZ}$ gates on a reference qubit and $k \geq 1$ testing qubits, we are able to extend the circuit and make $k$ copies of the desired one's complement for some reasonable chosen $k$, thus bypass the no-cloning problem.

We want to point out that finding the one's complement of a $n$-bit vector in classical world can also be considered as a simple task with a naive algorithm run in $O(d)$. However, our short-depth circuit is able to identify the orthogonal vector in constant time. Besides, the same circuit can be easily extended to cases where the reference qubits are not in pure states. As shown in Fig.\ref{fig:mixed}, if the reference vector(s) is either $\left(\begin{array}{ccc}
    1 & 0 & 1
\end{array}\right)^T$ or $\left(\begin{array}{ccc}
    0 & 0 & 1
\end{array}\right)^T$, we can simply apply the Hadamard gate on the first reference qubit, and perform the same algorithm. The measurement will be $\ket{010}$ and $\ket{110}$ with almost even distribution, which is as expected. 

\begin{figure}[ht]
     \centering
     \begin{subfigure}[h]{0.45\textwidth}
         \centering
         \includegraphics[width=\textwidth]{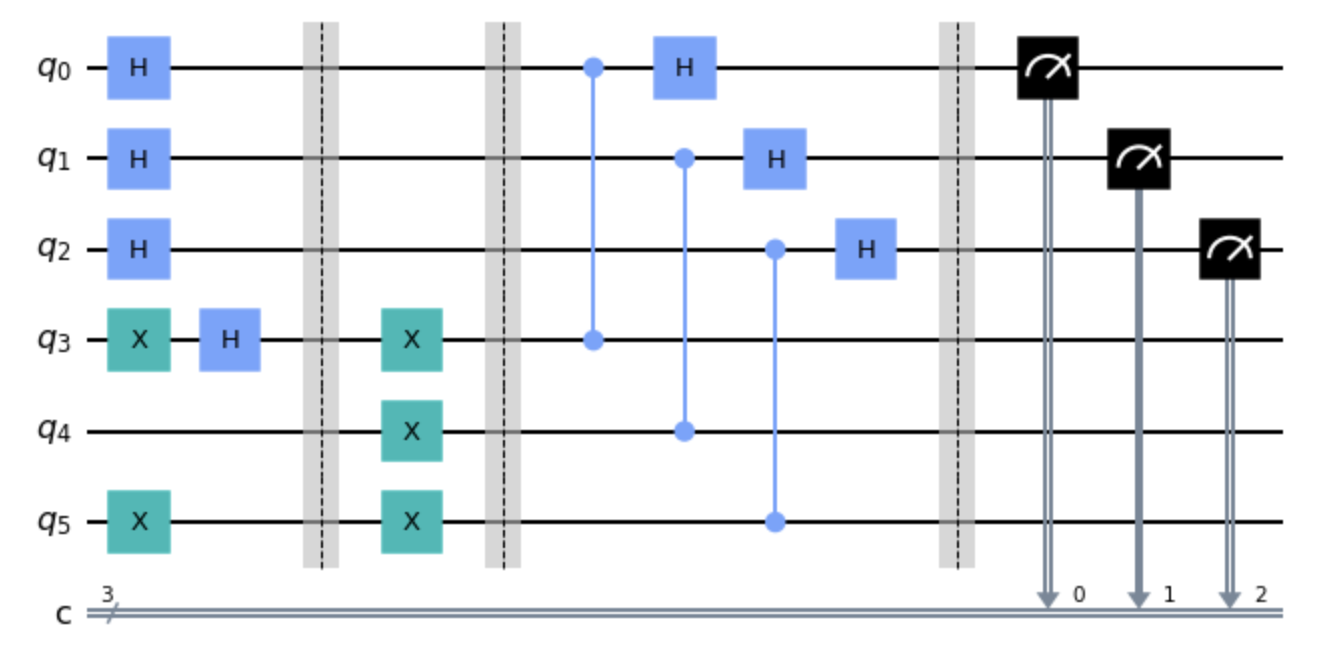}
         \caption{Circuit for OVP with mixed reference qubits.}
         \label{fig:mix-cir}
     \end{subfigure}
     \hfill
     \begin{subfigure}[h]{0.45\textwidth}
         \centering
         \includegraphics[width=\textwidth]{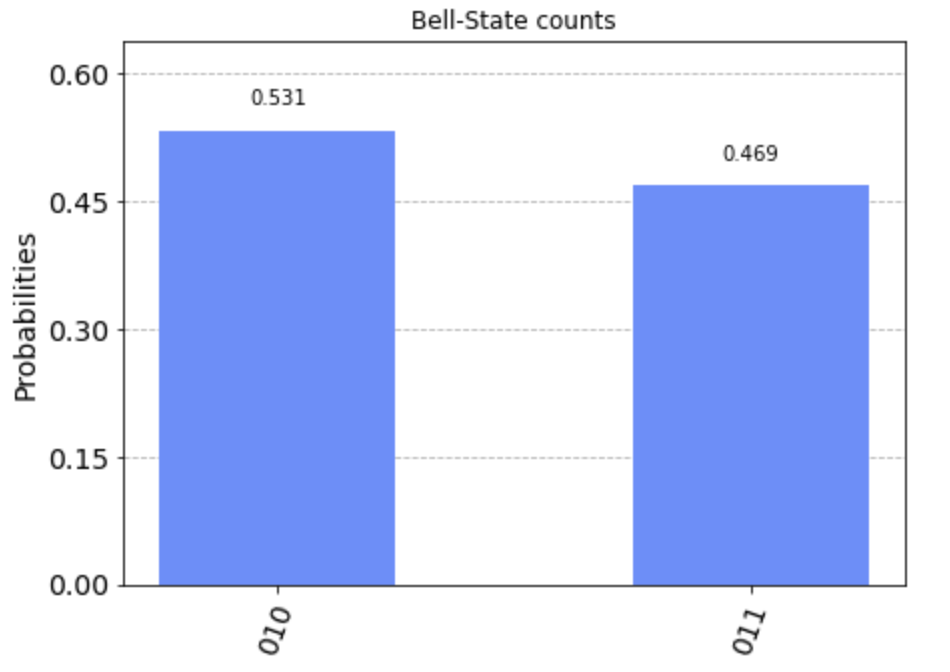}
         \caption{Result for OVP with mixed reference qubits.}
         \label{fig:mix-res}
     \end{subfigure}
        \caption{Simulation for OVP with mixed reference qubits.}
        \label{fig:mixed}
\end{figure}

\section{Conclusion}\label{conclusion}
Inspired on GSA, we first extend the results of the SDP under a 4-qubit quantum circuit with multiple winners. Then we give out two definitions of OVP under quantum settings. In \cref{def:ovp_1}, we make use of the GSA framework and treat the orthogonality test as an oracle with a general diffuser stage followed. We further investigate the behavior of a special initial distribution under modified GSA proposed by Biron et al \cite{biron1999generalized}. Unfortunately, the result does not collapse to the marked states but remains mixed. Under \cref{def:one_comp}, we focus on the OVP itself and utilize the idea of one's complement to implement a constant short-depth circuit with a deterministic result. 

Two main contributions are made in this work. First, based on the case analysis we conduct in \cref{sec:modified_gsa}, we propose a theorem that generalize the condition under which even the modified GSA framework does not apply. Moreover, we argue that treating each qubit as a $2$-dimensional vector as in \cref{def:ovp_1} may not be meaningful. The foremost observation is that in order to perform the SWAP test on all testing qubits neutrally, we need to hold $n$ copies of the reference qubits. Whereas it is impossible to clone an identical and independent copy of the reference qubit without disturbing it \cite{wootters1982single}. Additionally, we need $n$ measuring qubits for performing $n$ SWAP tests. Due to these limitations, we claim that we should focus on \cref{def:one_comp} based on one's complement. With the power of quantum computing, OVP of our definition can be solved in constant time with a constant-depth quantum circuit compared to $O(\sqrt{N/k})$ for GSA under the standard settings. Further research may be conducted on OVP and its variations to see if our implementation could be potentially useful for larger circuits.

\pagebreak
\addcontentsline{toc}{section}{References}
\bibliographystyle{alpha}
\bibliography{ref}

\pagebreak 
\appendix

\section{Circuits}
\begin{figure}[h]
    \centering
    \includegraphics[width=\textwidth]{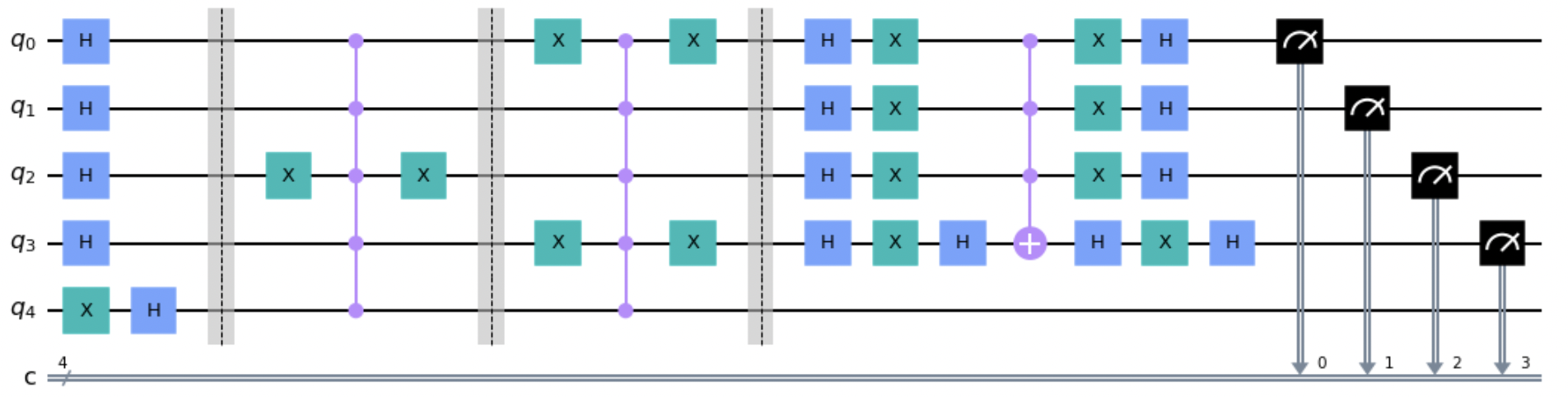}
    \caption{4-qubit String Detection Circuit with multiple winners. $q_4$ is the ancilla qubit. The winner for this circuit is $\mathbf{1101}$ and $\mathbf{0110}$. One iteration of general diffuser is applied.}
    \label{fig:4circuit}
\end{figure}

\begin{figure}[h]
    \centering
    \includegraphics[width=\textwidth]{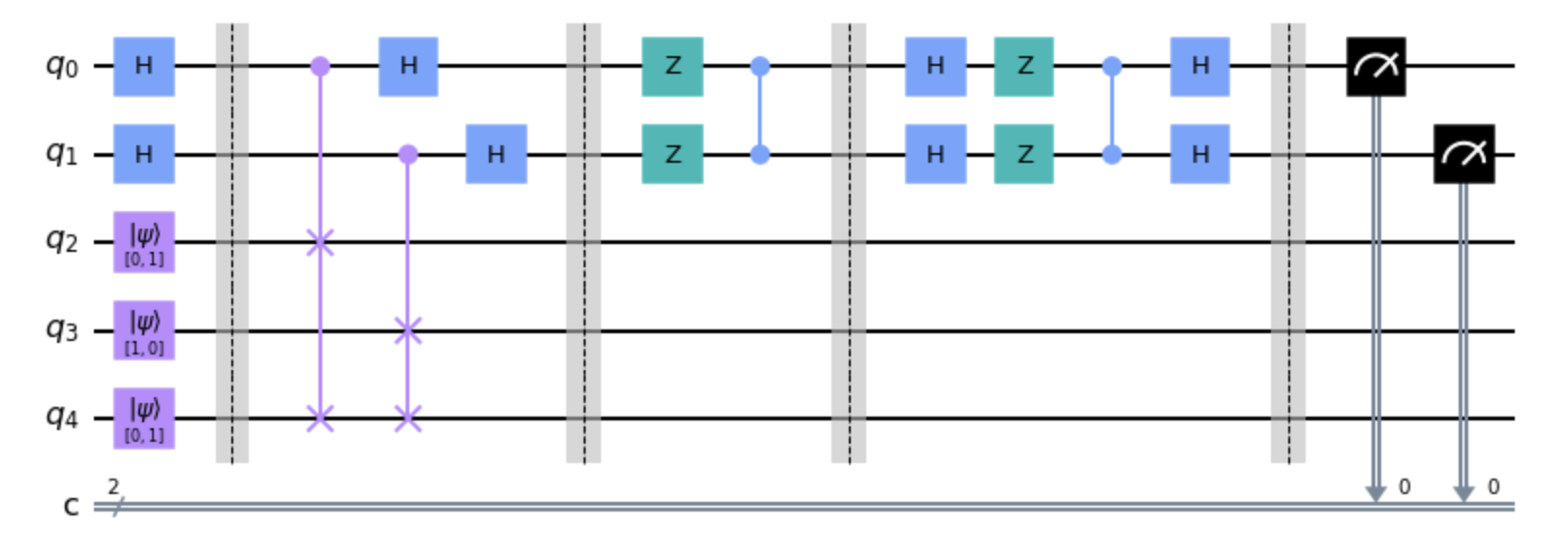}
    \caption{Orthogonal Vector Problem Circuit under GSA. $q_0$ and $q_1$ are the measuring qubits. $q_2$ and $q_3$ are the testing qubits representing different vectors $\in \{0,1\}^2$. $q_4$ is the reference qubit, representing one of the vectors in the final orthogonal vector pairs that we might detect. The SWAP test is first performed between testing qubits and the reference qubit. The oracle flips all state except for $\ket{00}$. One iteration of general $2$-qubit diffuser is applied.}
    \label{fig:ovp-circuit}
\end{figure}

\begin{figure}[ht]
    \centering
    \includegraphics[width=0.8\textwidth]{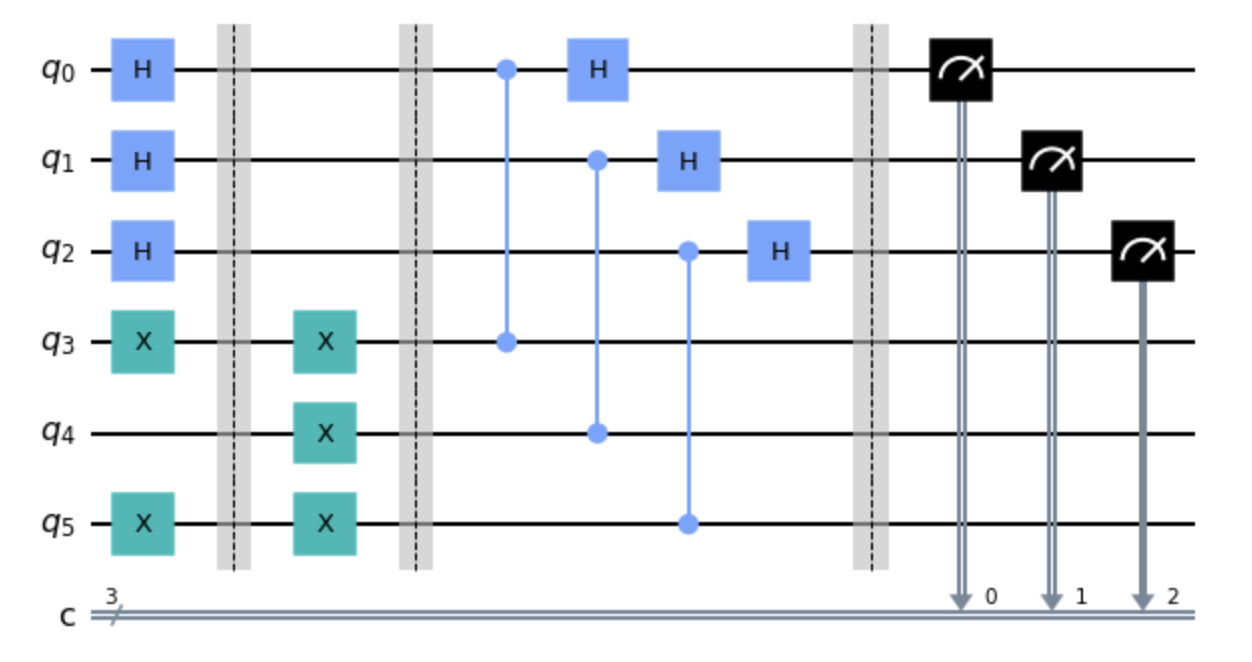}
    \caption{OVP Circuit without applying GSA. $q_3, q_4, q_5$ are the reference qubits, which together representing vector $\left(\begin{array}{ccc}
        1 & 0 & 1  
    \end{array}\right)^T$. $q_0, q_1, q_2$ represent all possible vectors $\in \{0,1\}^3$. The expected result should be $\left(\begin{array}{ccc}
        0 & 1 & 0  
    \end{array}\right)^T$. The logic is to flip the state from $\ket{0} + \ket{1}$ to $\ket{0} - \ket{1}$ if the control qubit in the reference vector is $\ket{0}$, after applying Hadmard gate the measuring qubit will be mapped to pure $\ket{1}$ state while other qubits will be mapped to pure $\ket{0}$ state. The whole process is similar to applying GSA without the diffusing stage.}
    \label{fig:ovp-no-GSA}
\end{figure}

\end{document}